\newcommand{\orcid}[1]{\,\href{https://orcid.org/#1}{\includegraphics[width=8pt]{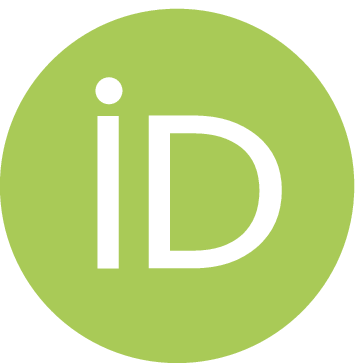}}}
\renewcommand{\S}{\mathcal{S}}
\newcommand{\scirc}{\textsc{Simple Circuit}}
\newcommand{\spath}{\textsc{Simple Path}}
\newtheorem{theorem}{Theorem}
\newtheorem{lemma}{Lemma}
\theoremstyle{remark}
\title{Linking disjoint axis-parallel segments into a simple polygon\\is hard too}
\author{Rain Jiang\orcid{0000-0002-0144-942X}\qquad
Kai Jiang\orcid{0000-0001-8165-0571}\qquad
Minghui Jiang\orcid{0000-0003-1843-9292}\,\thanks{\texttt{ dr.minghui.jiang at gmail.com}}\medskip\\
Home School, USA}
\date{}
\begin{document}

\maketitle

\begin{abstract}
	Deciding whether a family of disjoint axis-parallel line segments in the plane
	can be linked into a simple polygon (or a simple polygonal chain)
	by adding segments between their endpoints is NP-hard.
\end{abstract}

\section{Introduction}

Given a family $\S$ of $n$ closed line segments in the plane,
\scirc\ (respectively, \spath)
is the problem of deciding whether
these segments can be linked into
a simple polygon (respectively, a simple polygonal chain)
by adding segments between their endpoints.

Rappaport~\cite{Ra89} proved that \scirc\ is NP-hard
if the segments in $\S$ are allowed to intersect at their common endpoints,
and asked whether the problem remains NP-hard when the segments are disjoint.
Later, Bose, Houle, and Toussaint~\cite{BHT01} asked
whether the related problem \spath\ is NP-hard
when the segments in $\S$ are disjoint.
Later, T\'oth~\cite{To06} asked about the complexity of \spath\ again,
with a special interest in the case
when the segments in $\S$ are both disjoint and axis-parallel.

Recently, Akitaya et~al.~\cite{AKRST19}
proved that \scirc\ is NP-hard
when the segments in $\S$ are disjoint and have only four distinct orientations.
Subsequently, the authors of this paper came up with a similar construction~\cite{JJJ21},
and proved that both \scirc\ and \spath\ are NP-hard,
when the segments in $\S$ are disjoint and have only four distinct orientations.
Akitaya et~al.~\cite{AKRST19} asked whether \scirc\ is NP-hard
when the segments in $\S$ are disjoint and axis-parallel.

In this paper,
we prove the following theorem:

\begin{theorem}\label{thm:hard}
	\scirc\ and \spath\ are both NP-hard
	even if the segments in $\S$ are disjoint and axis-parallel.
\end{theorem}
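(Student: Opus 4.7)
The plan is to give a polynomial-time reduction from a known NP-hard source problem directly into the axis-parallel setting, rather than attempting a black-box reduction from the four-orientation construction of~\cite{AKRST19,JJJ21}. The natural source is \textsc{Planar 3-SAT} (or a rectilinear variant of it), since its planar embedding can be drawn on an integer grid using only axis-parallel lines, which meshes well with the restriction on segment orientations. The overall template is standard: build a variable gadget that admits exactly two linking states encoding true/false, a wire gadget that propagates a state along a rectilinear path, and a clause gadget whose three incoming wires can be simultaneously satisfied in a planar linking if and only if at least one literal is true. Because \scirc\ and \spath\ differ only in whether the final linking must close into a cycle or leave two endpoints free, I would design the global layout so that both versions follow from the same gadgetry with a minor modification (e.g., adding or removing a single ``closing'' segment on the outer boundary), and thus handle both problems simultaneously.

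The first technical step is to set up a rigid geometric framework: I would place a large bounding frame of short horizontal and vertical segments along a grid, arranged so that the only way to link them into a simple polygon is to route the tour through narrow ``channels'' between adjacent segments. Each such channel corresponds to a forced local choice between two disjoint pairings of segment endpoints (for instance, two colinear segments whose endpoints may be joined either to each other or outward into the rest of the construction). By chaining such binary choices along a polygonal chain of channels, I can build a wire that propagates a Boolean value, and by branching three wires into a common junction, I can build a clause gadget whose feasibility corresponds to the OR of the incoming literals.

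The principal obstacle, and the place where the axis-parallel restriction bites, is in designing a variable gadget whose two global states are genuinely forced to agree along all outgoing wires. With four orientations available one can use diagonal segments to ``cross'' the plane and synchronize values cheaply; here every crossing has to be replaced by a planar junction or a carefully arranged staircase of short horizontal and vertical segments that leaves exactly two topologically distinct Hamiltonian linkings. I expect most of the proof to go into showing that such a staircase gadget admits only the two intended linkings, ruling out unintended ``shortcut'' pairings between nearby endpoints. Once the gadgets are fixed, correctness has two directions: (i) given a satisfying assignment, exhibit an explicit simple polygon (respectively, simple polygonal chain) by gluing the local linkings dictated by the truth values; and (ii) conversely, show that any valid linking induces consistent truth values across wires and that each clause gadget forces at least one true literal. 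The reduction is clearly polynomial, so establishing (i) and (ii) for both the \scirc\ and \spath\ versions of the construction completes the proof of Theorem~\ref{thm:hard}.
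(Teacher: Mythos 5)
Your proposal is a plan, not a proof: every step that actually carries the mathematical weight is deferred. You write that you ``expect most of the proof to go into showing that such a staircase gadget admits only the two intended linkings,'' but you never exhibit a variable, wire, or clause gadget, let alone rule out the unintended shortcut pairings you yourself identify as the principal obstacle. Whether a clause gadget with the required OR behavior can even be realized with \emph{disjoint axis-parallel} segments is precisely the open question posed by Rappaport, T\'oth, and Akitaya et~al.\ that the theorem resolves, so asserting that the ``standard template'' applies begs the question. Until concrete gadgets are specified with coordinates and the forced-linking claims are verified (the analogue of the case analysis in Figures~\ref{fig:gadget1}--\ref{fig:gadget6}), there is no reduction to check.

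It is also worth contrasting your route with the paper's, because the paper avoids building a SAT reduction altogether. It starts from Rappaport's existing reduction from \textsc{Hamiltonian Path} in planar cubic graphs, which already produces axis-parallel segments that are interior-disjoint but may share endpoints, and the entire new contribution is a local gadget that replaces each shared endpoint by a small cluster of disjoint axis-parallel segments whose only feasible linking is a single forced alternating path (Lemmas~\ref{lem:angle} and~\ref{lem:link}), plus a one-gadget modification to handle \spath. The invisibility argument of Lemma~\ref{lem:angle}, which confines all nontrivial visibility to the interior of each gadget by a quantitative angle bound with $\delta = 1/(c\cdot n^2)$, is the kind of rigidity argument your ``narrow channels'' would also require; in your proposal it is asserted rather than established. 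A Planar 3-SAT reduction along your lines may well be possible, but it would need all of this machinery and more (three mutually consistent gadget types instead of one), so as written the proposal does not constitute a proof of Theorem~\ref{thm:hard}.
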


We prove the theorem in two steps.
First, we modify the construction in Rappaport's proof
of NP-hardness of \scirc\ on not necessarily disjoint axis-parallel segments~\cite{Ra89}
to show that the problem remains NP-hard on disjoint axis-parallel segments.
Next, we modify the construction further
to prove that \spath\ is also NP-hard on disjoint axis-parallel segments.

Some of the ideas behind our construction in this paper
are also used in our previous construction~\cite{JJJ21};
interested readers may read the short proof there as a warm-up exercise.
Following the same setup there,
we briefly review Rappaport's proof in the following,
which is based on a polynomial reduction from the NP-hard problem
\textsc{Hamiltonian Path} in planar cubic graphs.

\medskip
For any family $\S$ of closed segments in the plane,
denote by $V(\S)$ the set of endpoints of the segments in $\S$.
For any two endpoints $p$ and $q$ in $V(\S)$,
we call the open segment $pq$ a \emph{visibility edge}
if it does not intersect any closed segment in $\S$.

\medskip
Given a planar cubic graph $G$ with $n \ge 4$ vertices,
the reduction~\cite{Ra89} first obtains a rectilinear planar layout of the planar graph
using an algorithm of Rosenstiehl and Tarjan~\cite{RT86},
then constructs a family $\S$ of $O(n)$ axis-parallel segments
following the rectilinear planar layout,
such that $G$ admits a Hamiltonian path
if and only if $\S$ can be linked into a simple polygon
by adding visibility edges between endpoints in $V(\S)$.
The segments in $\S$ are axis-parallel and interior-disjoint,
but may intersect at common endpoints.
Each endpoint in $V(\S)$ is incident to at most one horizontal segment
and at most one vertical segment in $\S$.
Since the rectilinear planar layout
has width at most $2n-4$ and height at most $n$~\cite{RT86},
all coordinates of endpoints in $V(\S)$ are integers of magnitude
$O(n^2)$;
indeed a closer look at the construction~\cite[Figure~8]{Ra89}
shows that
$V(\S) \subseteq [1, 22n^2]\times [1, 11n]$.
The reduction is hence strongly polynomial.
Consequently, \scirc\ is strongly NP-hard,
on not necessarily disjoint axis-parallel segments.

\section{Modification for \scirc}

To show that \scirc\ remains strongly NP-hard on disjoint segments,
we will transform the family $\S$ of interior-disjoint axis-parallel segments,
which Rappaport constructed,
into a family $\S'$ of disjoint axis-parallel segments
in polynomial time,
such that $\S$ can be linked into a simple polygon
if and only if $\S'$ can be linked into a simple polygon.
This transformation can be viewed as a reduction from \scirc\ on one type
of input to the same problem on another type of input.

\begin{figure}[p]
	\centering\includegraphics{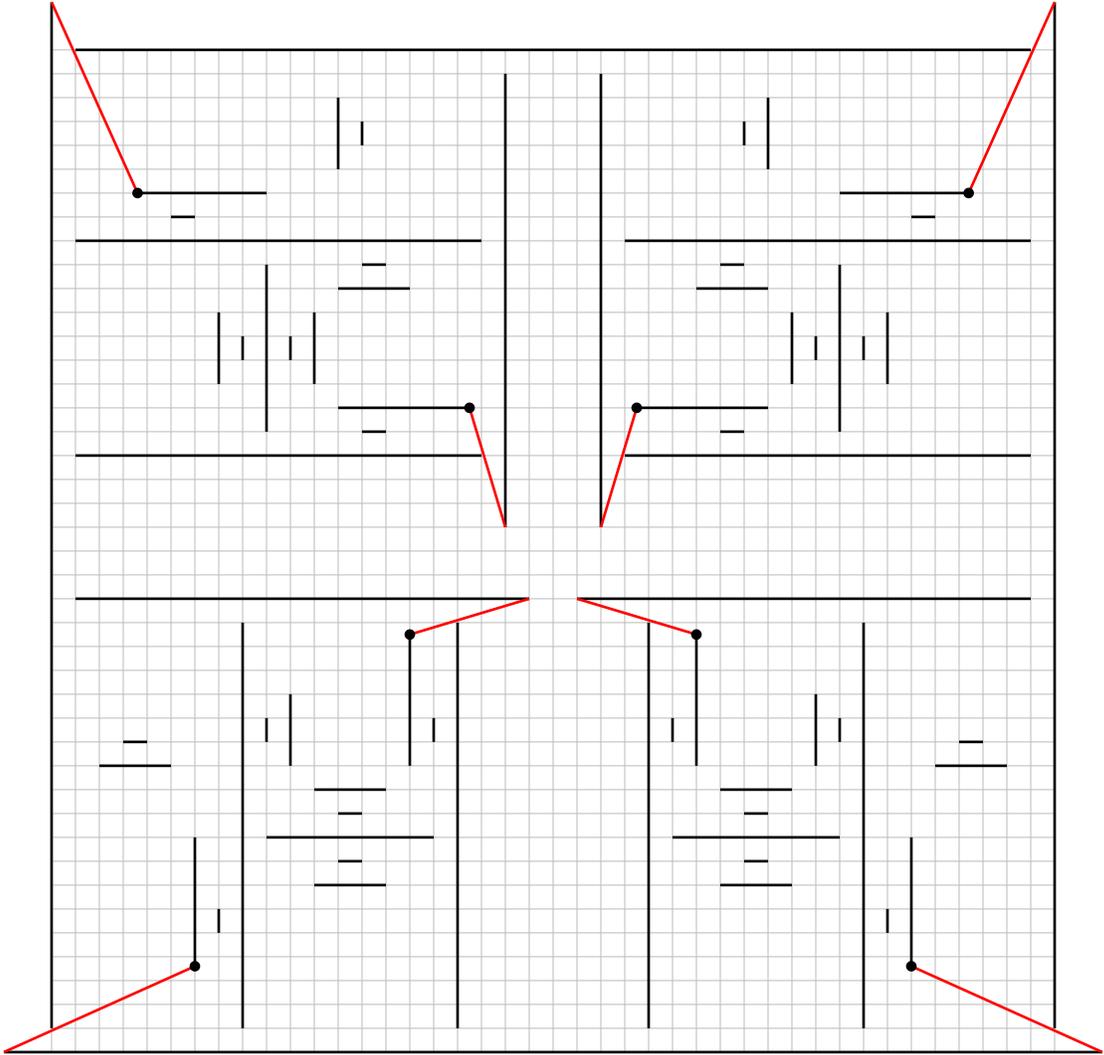}
	\caption{Gadgets of four different orientations for the four intersections
	(i.e., the four corners) of a unit square,
	scaled up by a factor of $42$ and illustrated on a $42\times 42$ grid.}\label{fig:all}
\end{figure}

To obtain $\S'$ from $\S$,
we first scale the integer coordinates of all segment endpoints
by a factor of $42$,
then locally modify each intersection between a horizontal segment
and a vertical segment into a gadget.
The gadgets come in four variants, one for each possible orientation of an intersection;
see Figure~\ref{fig:all}.
\clearpage

\begin{figure}[htb]
	\begin{center}
	\includegraphics{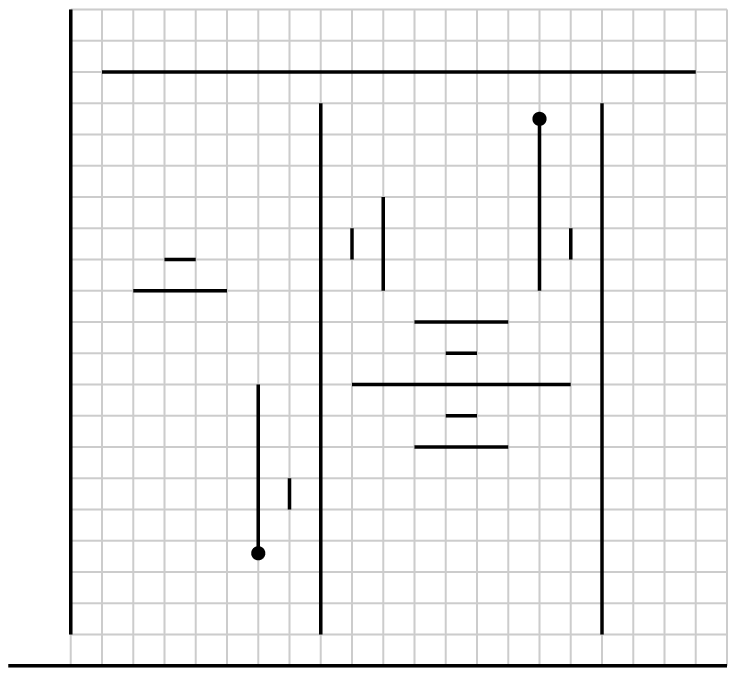}\includegraphics{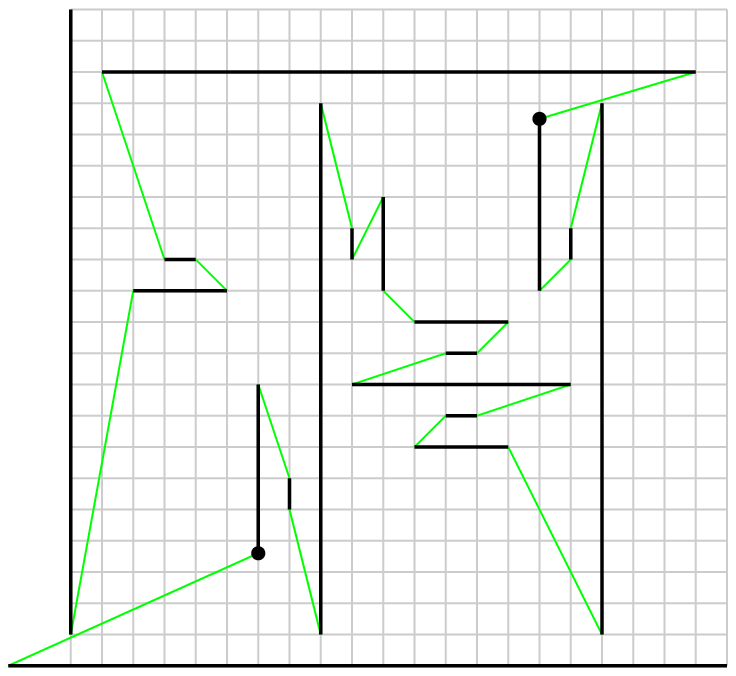}
	\end{center}
	\caption{The gadget for the intersection $o$ between a horizontal segment $oa$
	and a vertical segment $ob$ illustrated on a $21\times 21$ grid.
	Left: The common endpoint $o = (0, 0)$ is split to $o' = (-2,0)$, $o'' = (0, 1)$.
	The two endpoints illustrated as black dots are
	$u' = (6, 4 - \delta)$ and $v' = (15,17 + \frac13 + \delta)$.
	The $16$ added segments include
	the left group of five vertical segments,
	the middle group of five horizontal segments,
	the right group of three vertical segments,
	the top horizontal segment by itself,
	and the left group of two horizontal segments.
	The top horizontal segment has left endpoint $(1, 19)$ and right endpoint $(20, 19)$.
	Right: An alternating path of segments and visibility edges in the gadget.}\label{fig:gadget}
\end{figure}

By symmetry, it suffices to describe in detail only one variant of the gadget.
Refer to Figure~\ref{fig:gadget}.
For the intersection $o$ between a horizontal segment $oa$ and a vertical segment $ob$,
where $o$ is the left endpoint of $oa$ and the lower endpoint of $ob$,
the corresponding gadget is constructed as follows:
\begin{itemize}\setlength\itemsep{0pt}
		\item
			Separate the two segments $oa$ and $ob$ into two disjoint segments
			$o'a$ and $o''b$,
			by splitting their common endpoint $o = (0,0)$,
			then moving one to $o' = (-2,0)$ and the other to $o'' = (0,1)$.

		\item
			Add $16$ segments with integer coordinates as indicated by the grid lines in the figure,
			except that
		\begin{itemize}\setlength\itemsep{0pt}
				\item
					the lower endpoint of the vertical segment at $x = 6$ is $u' = (6, 4 - \delta)$,
				\item
					the upper endpoint of the vertical segment at $x = 15$ is
					$v' = (15,17 + \frac13 + \delta)$,
		\end{itemize}
		where $\delta$ is determined by Lemma~\ref{lem:angle} below.
\end{itemize}

In the presence of $\S'$,
we say that two points $p$ and $q$ can \emph{see} each other
if the open segment $pq$ is disjoint from all closed segments in $\S'$,
and we say that two segments $A$ and $B$ in $\S'$ can \emph{see} each other
if at least one of the four pairs of endpoints,
one of $A$ and one of $B$,
can see each other.
The gadgets we constructed have the following property of mutual invisibility:

\begin{lemma}\label{lem:angle}
	With $\delta = 1/(c\cdot n^2)$ for a sufficiently large integer constant $c > 0$,
	the endpoint $u' = (6,4-\delta)$ cannot see any endpoints
	outside the gadget through the gap between $o' = (-2,0)$ and $o'' = (0, 1)$,
	and
	the endpoint $v' = (15,17+\frac13+\delta)$ cannot see any endpoints
	outside the gadget through the gap between $(17,18)$ and $(20, 19)$.
\end{lemma}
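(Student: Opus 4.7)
The plan is to pin down the cone of slopes from $u'$ through the gap and show that no endpoint of $V(\S')$ outside the gadget produces a slope in this cone. A line through $u' = (6, 4-\delta)$ with positive slope $m$ hits $x = 0$ at $y = 4 - \delta - 6m$ and $y = 0$ at $x = 6 - (4-\delta)/m$; it squeezes between $o'' = (0, 1)$ and $o' = (-2, 0)$ iff the first quantity is less than $1$ and the second less than $-2$, that is, iff
\[
	m \;\in\; I_\delta \;:=\; \bigl(\tfrac12 - \tfrac{\delta}{6},\ \tfrac12 - \tfrac{\delta}{8}\bigr),
\]
an interval of width $\delta/24$ just below the slope $\tfrac12$ of the line $y = x/2 + 1$ through $o'$, $o''$, and the unperturbed position $(6, 4)$ of $u'$. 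For any outside endpoint $(p, q)$ visible through the gap (necessarily with $p < 6$), the test
\[
	s(p, q) - \tfrac12 \;=\; \frac{N + 2\delta}{2(p - 6)} \in \bigl(-\tfrac{\delta}{6},\,-\tfrac{\delta}{8}\bigr), \qquad N := 2q - p - 2,
\]
rearranges to $N + 2\delta \in \bigl(\delta(6-p)/4,\ \delta(6-p)/3\bigr)$.

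For endpoints with integer coordinates, $N$ is an integer; the constraint interval has length $\delta(6-p)/12$, which is less than $1$ once $c$ is large (using $|p - 6| = O(n^2)$ from the scaled Rappaport layout). A direct check of where the interval sits shows it contains an integer only when $p = -1$, forcing $N = 0$ and hence $q = 1/2$, which is not an integer. So no integer-coordinate endpoint outside the gadget has slope in $I_\delta$.

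The remaining endpoints are the $u'$- and $v'$-type points of other gadgets at grid offsets $(42 k_0,\, 42 l_0) \ne (0, 0)$; these I would enumerate orientation by orientation. For a matching-orientation $u'$-type at $(42 k_0 + 6,\, 42 l_0 + 4 - \delta)$ the two $\delta$'s cancel and $s = l_0/k_0$ is a rational of denominator $\le O(n^2)$; it is either exactly $\tfrac12$ (excluded from the open $I_\delta$) or at distance $\ge 1/(2|k_0|) = \Omega(1/n^2)$ from $\tfrac12$. For a matching-orientation $v'$-type at $(42 k_0 + 15,\, 42 l_0 + 17 + \tfrac13 + \delta)$, the $\tfrac13$ offset makes $3(2q - p - 2)$ a nonzero integer $\equiv 2 \pmod 3$, so $|s - \tfrac12| = \Omega(1/n^2)$ up to an $O(\delta)$ correction. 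Reflected or rotated orientations produce variants of the same two sub-cases and are handled identically. Choosing $c$ large enough that $\delta/6 = 1/(6 c n^2)$ is below every $\Omega(1/n^2)$ gap completes the proof for $u'$, and the symmetric analysis for $v'$ through the gap between $(17, 18)$ and $(20, 19)$ finishes the lemma. The main obstacle is the orientation-by-orientation bookkeeping for non-integer endpoints, though the underlying principle is uniform: outside slopes form a finite rational set whose minimum gap from $\tfrac12$ is $\Omega(1/n^2)$, and $I_\delta$ of width $O(1/(c n^2))$ fits into this gap for $c$ large.
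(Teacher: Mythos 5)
Your slope-interval computation for integer endpoints is correct and is essentially a reparametrization of the paper's argument: the paper bounds the viewing \emph{angle} at $o''$ by $\delta/6$ and shows any integer point $p$ off the line $o'o''$ satisfies $\angle o'o''p > 1/(5000n^2)$ via the lattice-triangle area bound $\mathrm{area}(o'o''p)\ge\frac12$, while you bound the \emph{slope} window $I_\delta=(\frac12-\frac\delta6,\frac12-\frac\delta8)$ and show the integer $N=2q-p-2$ cannot fall in an interval of width $\delta(6-p)/12<1$ positioned near $0$. These are the same idea (a nonzero integer determinant forces separation $\Omega(1/n^2)$, which dominates $\delta=1/(cn^2)$ for large $c$), and your version is, if anything, more explicit. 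The verification that the only integer hit would require $p=-1$, $q=\frac12$ checks out.

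The gap is in your treatment of the non-integer endpoints (the black dots of other gadgets), and specifically in the closing ``uniform principle.'' You claim the outside slopes form a finite rational set at distance $\Omega(1/n^2)$ from $\frac12$, so that $I_\delta$ of width $O(1/(cn^2))$ misses them for large $c$. But $I_\delta$ does not merely have width $O(\delta)$ --- it is \emph{located} at distance between $\delta/8$ and $\delta/6$ from $\frac12$. For a black dot whose $\delta$-perturbation does not cancel against that of $u'$ (e.g., a $u'$-type point of a reflected orientation carrying $+\delta$ rather than $-\delta$, or a dot whose perturbation sits in the $x$-coordinate for a $90^\circ$-rotated gadget), the slope is $\frac{A}{B}\pm O(\delta/|B|)$ with $A,B$ integers; if $\frac{A}{B}=\frac12$ exactly, the slope lies at distance $\Theta(\delta/|B|)$ from $\frac12$, which is $O(\delta)$, not $\Omega(1/n^2)$, and whether it lands inside $I_\delta$ is a question of constants: you must check $2/|B|\notin[\frac18,\frac16]$, i.e.\ $|B|\notin\{13,14,15,16\}$. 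This does hold, because distinct gadgets occupy distinct $42\times42$ cells so the nearest foreign black dot is at horizontal distance well over $16$ from $u'$, but that observation is exactly what your write-up omits, and ``choosing $c$ large'' does not substitute for it since both the window and the perturbation scale linearly in $\delta$. The paper avoids this constant-chasing entirely by a two-sided argument: a foreign black dot is itself visible only through its own gadget's gap, so a sight line between two black dots must have its slope in two of the eight narrow ranges simultaneously, and these ranges (around $\pm\frac12,\pm\frac13$, etc.) are disjoint across orientations. You should either import that two-sided argument or add the explicit lower bound on $|B|$.
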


\begin{figure}[htb]
	\centering\includegraphics{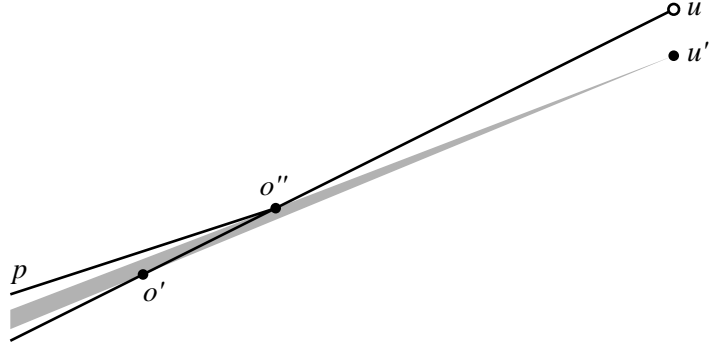}
	\caption{The narrow viewing angle from $u'$ through the gap between $o'$ and $o''$.}\label{fig:angle}
\end{figure}

\begin{proof}
	Refer to Figure~\ref{fig:angle}.
	Let $u = (6,4)$, which is collinear with $o' = (-2,0)$ and $o'' = (0,1)$.
	Note that
	$|u u'| = \delta$,
	$|o'' u'| > 6$,
	and $\angle o'' u' u > \pi/2$.
	Thus
	$$
	\angle u o'' u' < \tan\angle u o'' u' < \frac{|u u'|}{|o'' u'|} < \frac{\delta}6.
	$$

	Consider the ray that starts from $o''$ and goes through $o'$.
	Rotate this ray clockwise about $o''$ for a positive angle
	till it goes through another integer point $p$ in $V(\S')$.
	The area of the triangle $o'o''p$ is at least $\frac12$
	since all three endpoints have integer coordinates.
	Recall that the endpoints of all segments in $\S$ are in the range
	$[1, 22n^2]\times [1, 11n]$,
	and is scaled by a factor of $42$
	by the transformation from $\S$ to $\S'$.
	Thus we have
	$|po''| < 42\cdot (22n^2 + 11n) < 50 \cdot 33 n^2$.
	Also note that $|o'o''| < 3$.
	Thus
	$$
	\angle o'o''p > \sin\angle o'o''p
	= \frac{2\cdot\mathrm{area}(o'o''p)}{|o'o''|\cdot |po''|}
	> \frac1{3 \cdot 50 \cdot 33 n^2}
	> \frac1{5000 n^2}.
	$$

	Let $\delta = 1 / (c\cdot n^2)$ for a sufficiently large integer $c > 0$ such that
	$\angle u o'' u' < \angle o'o''p$.
	Then the ray $u' o''$ splits the angle $\angle o'o''p$.
	Thus the cone with the angle $\angle o' u' o''$ does not contain
	in its interior any integer endpoints in $V(\S')$.
	Thus $u'$ cannot see any integer endpoints through the gap between $o'$ and $o''$.

	Let $v = (15,17+\frac13)$, which is collinear with $(17,18)$ and $(20,19)$.
	Then $|v v'| = \delta$.
	By a similar analysis, we can also guarantee that
	$v'$ cannot see any integer endpoints from other gadgets through the gap
	between $(17,18)$ and $(20,19)$.

	Moreover,
	for an intersection of the orientation as illustrated in Figure~\ref{fig:gadget},
	the slope of a viewing line through $u'$ and its gap is around $\frac12$,
	and the slope of a viewing line through $v'$ and its gap is around $\frac13$.
	With $\delta = 1 / (c\cdot n^2)$ for a sufficiently large integer $c > 0$,
	no two endpoints $u'$ or $v'$ from two different gadgets can see each other,
	because the eight narrow ranges of slopes of viewing lines,
	as illustrated by the red lines in Figure~\ref{fig:all},
	are disjoint for intersections of different orientations.
\end{proof}

The following lemma shows that our local transformation preserves linkability:

\begin{lemma}\label{lem:link}
	$\S$ can be linked into a simple polygon
	if and only if $\S'$ can be linked into a simple polygon.
\end{lemma}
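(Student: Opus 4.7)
The plan is to prove the two directions separately, treating each gadget as a localized modification whose internal linking behavior is tightly constrained by Lemma~\ref{lem:angle}.

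For the forward direction, I would start from a linking of $\S$ into a simple polygon $P$ and modify it one gadget at a time to absorb the $16$ added segments. For each intersection $o$, the polygon $P$ either turns at $o$, treating $oa$ and $ob$ as a single chain $aob$, or uses them as two disjoint chains; in each case I would splice in the added segments using the alternating path of segments and visibility edges exhibited on the right of Figure~\ref{fig:gadget}. Because this path stays inside the $21\times 21$ local grid and only uses endpoints belonging to the gadget, the splicing does not interfere with the rest of $P$; performing it in parallel at every gadget yields a simple polygon linking $\S'$.

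For the backward direction, I would start from a linking $P'$ of $\S'$ and contract each gadget back to its original intersection. The crucial input is Lemma~\ref{lem:angle}: the antenna endpoints $u'$ and $v'$ cannot see anything outside their gadget through the two gaps, and the narrow slope ranges also make antennas of distinct gadgets mutually invisible. So every visibility edge of $P'$ at $u'$ or $v'$ lies inside the gadget, which forces an alternating sub-chain of $P'$ to live entirely inside the gadget and to meet the rest of $P'$ only through the four external endpoints $o', o'', a, b$. I would then delete the $16$ added segments, glue $o'$ and $o''$ back to $o$, and replace the internal sub-chain by either the single chain $aob$ or the two disjoint chains $oa$ and $ob$, according to how the external endpoints are paired up by $P'$.

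The hard part will be the backward direction: I must enumerate, for each of the four gadget orientations shown in Figure~\ref{fig:all}, the small set of alternating sub-chains that are compatible with the visibility constraints forced on $u'$ and $v'$, and certify that every such sub-chain induces a pairing of the external endpoints corresponding to a legal use of $o$ in $\S$. A secondary concern is that the scaling by $42$ together with $\delta = 1/(c n^2)$ must not create any spurious visibility edges between distinct gadgets or from a gadget's interior to a far-away segment; since all non-antenna endpoints sit on the scaled integer grid and the gadgets occupy disjoint $21\times 21$ bounding boxes, I expect this to reduce to a routine inspection.
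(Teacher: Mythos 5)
Your overall strategy matches the paper's: splice the alternating path of Figure~\ref{fig:gadget} (right) into each gadget for the forward direction, and for the reverse direction show that the gadget's internal linking is forced and then contract each gadget back to the vertex $o$. But the reverse direction as you present it is a plan, not a proof, and the part you defer as ``the hard part'' is essentially the entire content of the paper's argument: a step-by-step forcing analysis (Figures~\ref{fig:gadget1} through~\ref{fig:gadget6}) showing that the $16$ added segments must be linked consecutively in the unique sequence $5^+,1,17,1,3,3,1,7,1,3,17,1,5^+,19,1,3$ and that this path must terminate at $o'$ and $o''$. Moreover, you misattribute the forcing mechanism. Lemma~\ref{lem:angle} does not by itself confine an alternating sub-chain to the gadget; the forcing is driven by the length-$1$ segments buried in the interior, each of which can see only the long segment next to it plus one other segment, and these constraints propagate outward group by group. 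Lemma~\ref{lem:angle} plays the narrower role of plugging the only two ``leaks'' ($u'$ through the gap between $o'$ and $o''$, and $v'$ through the gap between $(17,18)$ and $(20,19)$) by which an interior endpoint could otherwise see out of the gadget. Several gadget endpoints, such as those of the top horizontal segment, \emph{can} see outside the gadget, so your claim that every visibility edge at a gadget endpoint stays internal is false as stated; it is only the forced demands of the interior segments that leave those outer endpoints no choice.

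Two smaller points. First, your case split in the forward direction (``turns at $o$'' versus ``two disjoint chains'') is vacuous: if $oa$ and $ob$ share the endpoint $o$ and both are edges of a simple polygon, then $o$ is a degree-$2$ vertex of that polygon, so the polygon necessarily turns at $o$. Correspondingly, in the reverse direction there is no ``pairing of external endpoints'' to read off: the forced internal path always joins $o'$ to $o''$, so the gadget always contracts to the single chain $aob$, which is exactly what makes the transformation sound. Second, your ``secondary concern'' about spurious inter-gadget visibilities is not resolved by disjoint bounding boxes alone; it is resolved by the last paragraph of the proof of Lemma~\ref{lem:angle}, which checks that the eight narrow slope ranges of the viewing lines through $u'$ and $v'$ in the four gadget orientations are pairwise disjoint.
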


\begin{proof}
	We first prove the direct implication.
	Suppose that $\S$ can be linked into a simple polygon.
	For each visibility edge of this polygon between two endpoints in $V(\S)$,
	we add a visibility edge between the corresponding endpoints in $V(\S')$.
	For each endpoint $o$ incident to two segments $oa$ and $ob$ in $\S$,
	we link the segments in the corresponding gadget in $\S'$
	following the alternating path as illustrated in Figure~\ref{fig:gadget} right.
	Then $\S'$ is also linked into a simple polygon.

	We next prove the reverse implication.
	Suppose that $\S'$ can be linked into a simple polygon.
	Refer to Figures~\ref{fig:gadget1},
	\ref{fig:gadget2}, \ref{fig:gadget3},
	\ref{fig:gadget4},
	\ref{fig:gadget5},
	\ref{fig:gadget6}.
	The linking of the segments along the sequence of lengths as illustrated
	in Figures~\ref{fig:gadget1} through~\ref{fig:gadget6} is inevitable in each gadget.
	Then, following the other visibility edges of the simple polygon through $\S'$,
	which are outside and between the gadgets,
	$\S$ can be linked into a simple polygon too.
\end{proof}

\begin{figure}[p]
	\centering\includegraphics{gadget.eps}\includegraphics{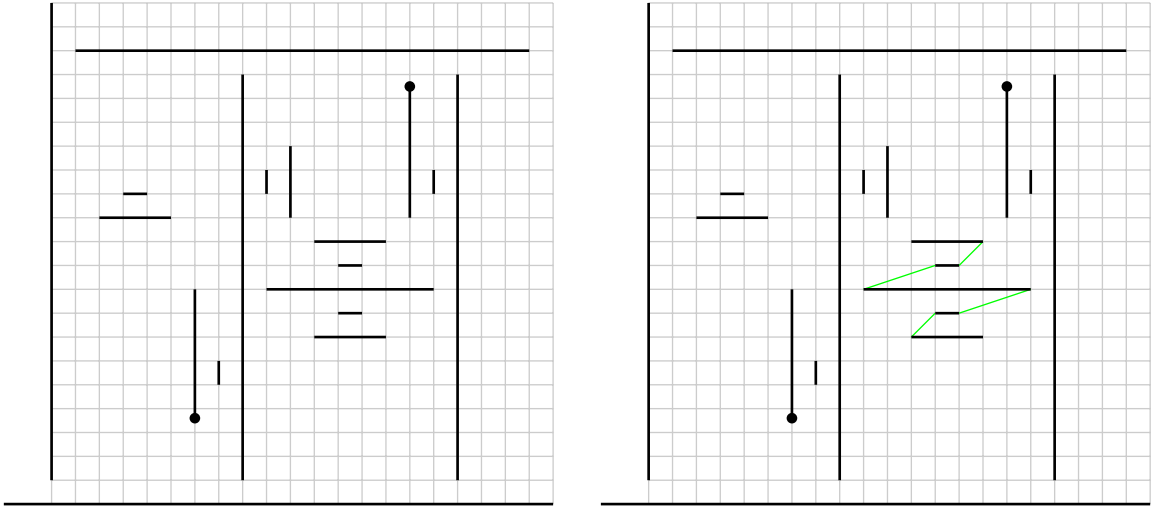}
	\caption{Linking the middle group of five horizontal segments.
	Each of the two length-$1$ segments
	can only see the length-$7$ segment between them and one other segment.
	Thus the five segments must be linked consecutively,
	forming a sequence of $3,1,7,1,3$ in lengths.
	There is some flexibility in the choice of which endpoints to link between two consecutive
	segments in the sequence.
	The combination of visibility edges illustrated here and in subsequent figures
	may be one of many possibilities unless specified.}\label{fig:gadget1}
\end{figure}

\begin{figure}[p]
	\centering\includegraphics{gadget1.eps}\includegraphics{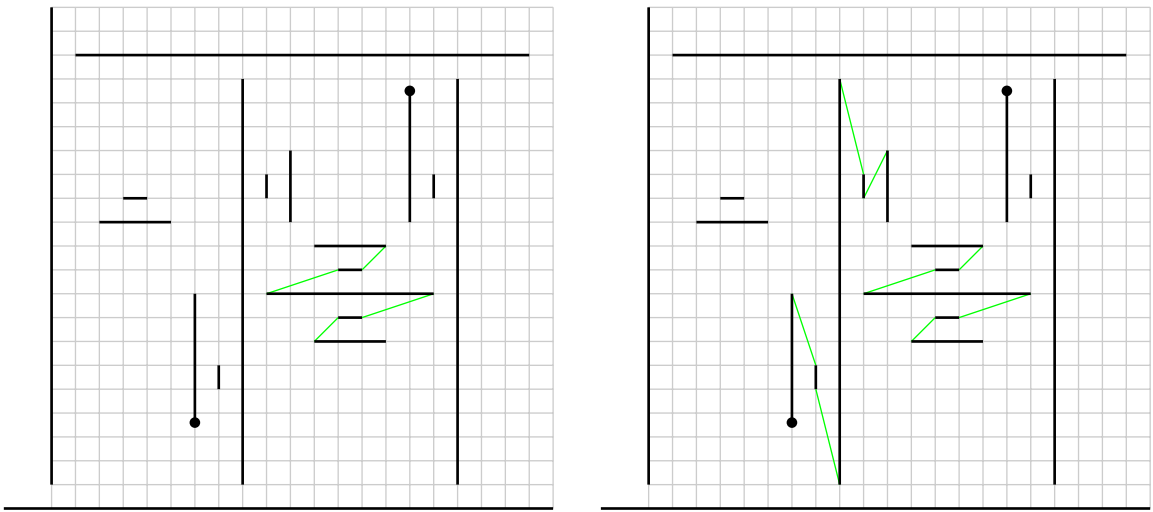}
	\caption{Linking the left group of five vertical segments.
	Among the unlinked neighbors,
	each of the two length-$1$ segments
	can only see the length-$17$ segment between them and one other segment.
	Thus the five segments must also be linked consecutively,
	forming a sequence of $5^+,1,17,1,3$ in lengths.}\label{fig:gadget2}
\end{figure}

\begin{figure}[p]
	\centering\includegraphics{gadget2.eps}\includegraphics{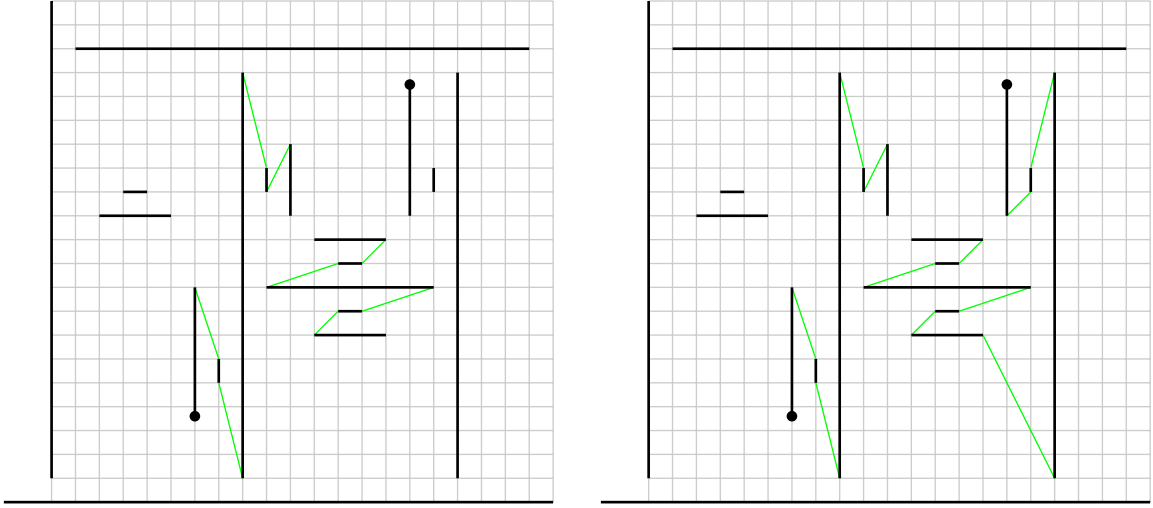}
	\caption{Linking the right group of three vertical segments.
	The lowest segment in the middle group of five horizontal segments must now be linked
	to the lower endpoint of the length-$17$ vertical segment.
	Consequently, the length-$1$ segment in the right group of three vertical segments
	can only be linked to the other two segments in the same group,
	and the sequence extends to $3,1,7,1,3,17,1,5^+$.}\label{fig:gadget3}
\end{figure}

\begin{figure}[p]
	\centering\includegraphics{gadget3.eps}\includegraphics{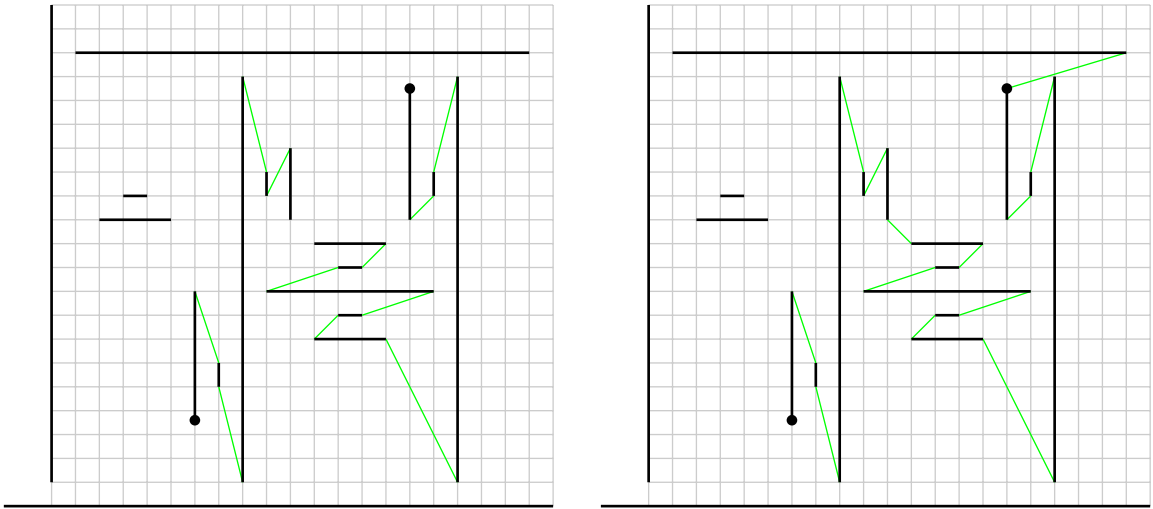}
	\caption{Linking the top horizontal segment.
	To avoid creating a loop or a dead end,
	the highest segment in the middle group of five horizontal segments must now be linked
	to the rightmost segment in the left group of five vertical segments,
	and the two sequences merge into a single sequence $5^+,1,17,1,3,3,1,7,1,3,17,1,5^+$.
	Then the endpoint $v' = (15,17+\frac13+\delta)$ must be linked to the right endpoint $(20,19)$
	of the top horizontal segment;
	linking it to the left endpoint $(1,19)$ would block
	further linking to the length-$1$ segment
	in the left group of two horizontal segments.}\label{fig:gadget4}
\end{figure}

\begin{figure}[p]
	\centering\includegraphics{gadget4.eps}\includegraphics{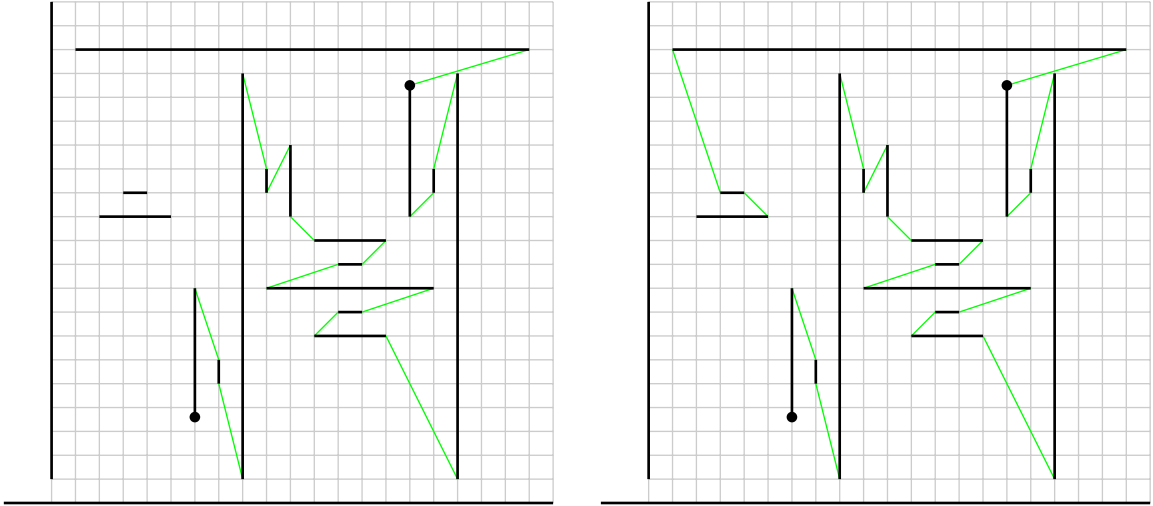}
	\caption{Linking the left group of two horizontal segments.
	As required by the length-$1$ horizontal segment in the left group,
	the sequence must extend to $5^+,1,17,1,3,3,1,7,1,3,17,1,5^+,19,1,3$.}\label{fig:gadget5}
\end{figure}

\begin{figure}[p]
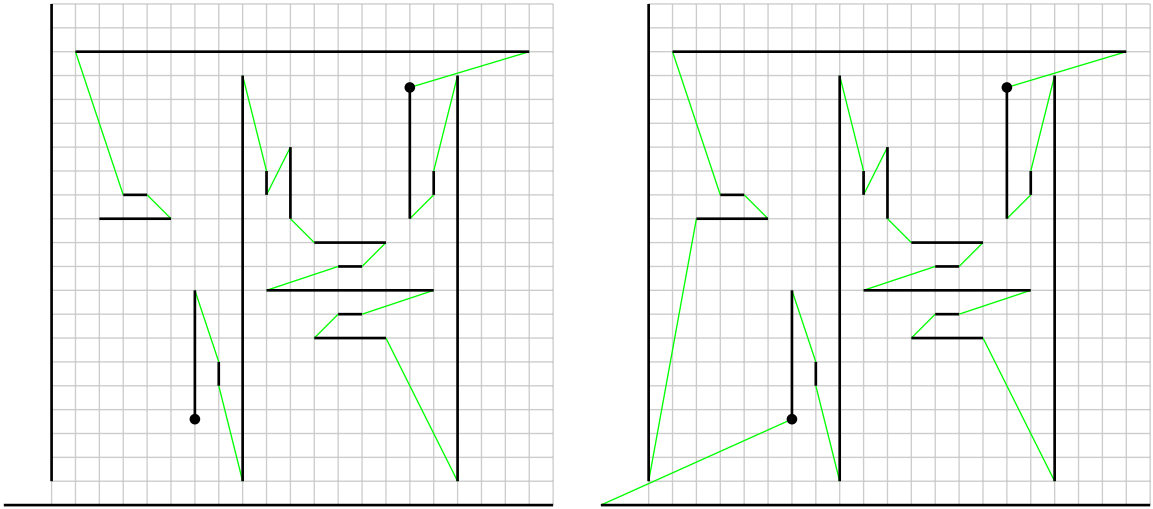

	\centering\includegraphics{gadget5.eps}\includegraphics{gadget6.eps}
	\caption{Linking the horizontal segment $o'a$ and the vertical segment $o''b$.
	Finally,
	the sequence extends at both ends to $o'$ and $o''$.
	In particular, the length-$3$ horizontal segment in the left group must be
	linked to $o'' = (0,1)$,
	and the length-$5^+$ vertical segment in the left group of five vertical segments
	must be linked to $o' = (-2,0)$ through its lower endpoint $u' = (6,4-\delta)$.
	}\label{fig:gadget6}
\end{figure}

\clearpage

Recall that the coordinates of endpoints in $V(\S)$
are integers of magnitude $O(n^2)$.
After the transformation,
all endpoints in $V(\S')$ except the black dots
have integer coordinates too.
We can scale all coordinates by another factor of $O(1/\delta) = O(n^2)$,
so that all endpoints in $V(\S')$ including the black dots
have integer coordinates of magnitude $O(n^4)$,
and the reduction remains strongly polynomial.
Thus \scirc\ is strongly NP-hard,
even if the input segments are disjoint and axis-parallel.

\section{Modification for \spath}

To prove that \spath\ is also NP-hard,
we use almost the same transformation from $\S$ to $\S'$ as before,
except two changes:
\begin{enumerate}\setlength\itemsep{0pt}

		\item
			Increase the initial scaling factor from $42$ to $80$,
			and correspondingly decrease the distance $\delta$
			when constructing the gadgets, to ensure that Lemma~\ref{lem:angle} still holds.

		\item
			Select an arbitrary gadget,
			and replace it by an extended gadget of the same orientation.
			For example, if the gadget is as illustrated in Figure~\ref{fig:gadget},
			then the extended gadget is as illustrated in Figure~\ref{fig:extended}.
			A closer look at the construction~\cite[Figure~8]{Ra89} shows that there is at least one
			intersection in $\S$; correspondingly, we have at least one gadget in $\S'$ ready for
			this upgrade.

			The endpoints of the segments inside the extended gadget have integer coordinates
			as indicated by the grid lines, except six endpoints illustrated as black dots:
		\begin{itemize}\setlength\itemsep{0pt}
				\item
					the lower endpoint of the vertical segment at $x = 6$ is $u' = (6, 4 - \delta)$,
				\item
					the two vertical segments at $x = 8$ and $x = 38$ have
					lower endpoints at $y = \epsilon$ and upper endpoints at $y = 18 - \epsilon$,
					where $\epsilon = \frac1{1000}$.
				\item
					the upper endpoint of the vertical segment at $x = 36$ is
					$v'' = (36,18 - 3\epsilon + \delta)$.
		\end{itemize}

\end{enumerate}

The purpose of the first change is to make room for additional segments
in the extended gadget.
Note that the endpoint $v'' = (39,18 - 3\epsilon + \delta)$ is at a distance of $\delta$ 
above the point $(36,18 - 3\epsilon)$ which is collinear with the two endpoints
$(38, 18 - \epsilon)$ and $(39, 18)$.
By an analogous argument as in Lemma~\ref{lem:angle},
with $\delta = 1/(c\cdot n^2)$ for a sufficiently large integer constant $c > 0$,
we can ensure that $u'$ and $v''$ cannot see any endpoints outside the extended gadget.
Clearly, the reduction remains strongly polynomial.

\begin{figure}[htbp]
	\centering\includegraphics{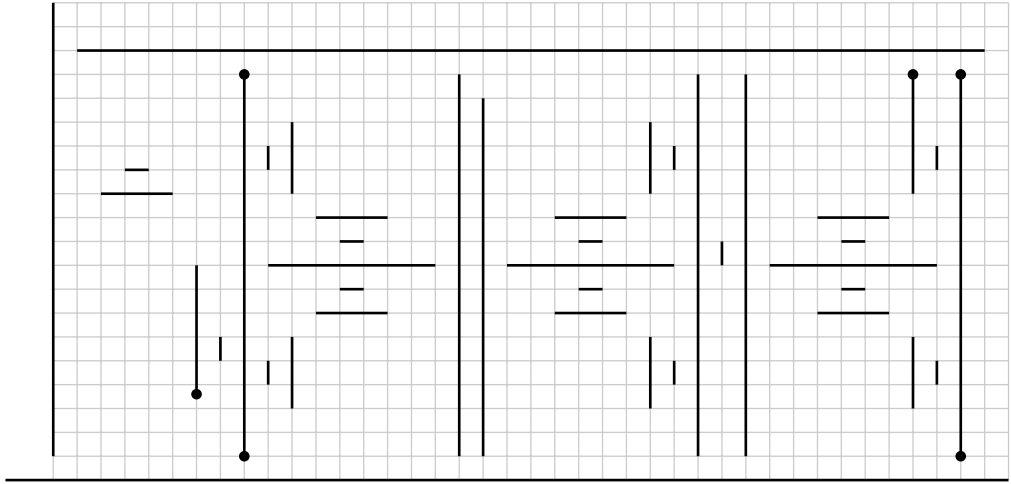}
	\caption{The extended gadget for the intersection $o$ between a horizontal segment $oa$
	and a vertical segment $ob$ illustrated on a $40\times 20$ grid
	(due to the scaling factor of $80$, there is a free space of $40\times 40$ for each gadget,
	which is sufficient for both the $40\times 20$ grid of the extended gadget
	and the $21\times 21$ grid of the ordinary gadgets).
	The common endpoint $o = (0, 0)$ is split to $o' = (-2,0)$, $o'' = (0, 1)$.
	The top horizontal segment has left endpoint $(1, 18)$ and right endpoint $(39, 18)$.
	The six endpoints illustrated as black dots are, from left to right,
	$u' = (6, 4-\delta)$, $(8, \epsilon)$, $(8, 18-\epsilon)$,
	$v'' = (36, 18 - 3\epsilon + \delta)$, $(38, \epsilon)$, $(38, 18 - \epsilon)$.
	Except the leftmost black dot $u'$,
	the other five black dots are illustrated with integer $y$-coordinates
	here and in subsequent figures for visual clarity.}\label{fig:extended}
\end{figure}

The following lemma is analogous to Lemma~\ref{lem:link}:

\begin{lemma}
	$\S$ can be linked into a simple polygon
	if and only if $\S'$ can be linked into a simple polygonal chain.
\end{lemma}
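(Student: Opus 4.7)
The plan is to follow the two-directional structure of Lemma~\ref{lem:link}, with the extended gadget playing a distinguished role: it must absorb the discrepancy between a closed polygon on $\S$ and an open polygonal chain on $\S'$ by forcing both endpoints of the chain to lie in its interior. All the exterior structure of $\S'$ and all the ordinary gadgets behave exactly as before, so the only genuinely new work is the forced-linking analysis of the extended gadget.

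First I would redo the forced-linking analysis of Figures~\ref{fig:gadget1}--\ref{fig:gadget6} for the extended gadget in Figure~\ref{fig:extended}. The method is the same: locate each length-$1$ segment, observe from Lemma~\ref{lem:angle} (adapted with the new $\delta$) and the local visibility pattern that it has only a tiny set of admissible neighbors, then propagate the resulting sequences until every segment has been accounted for. The qualitative difference I expect is that the two tall vertical segments at $x=8$ and $x=38$ (whose endpoints at $y=\epsilon$ and $y=18-\epsilon$ leave only hairline gaps at the top and bottom), together with the shortened top horizontal segment and the shifted endpoint $v''$, will force the segments inside the extended gadget to link into \emph{two} forced alternating paths rather than a single one. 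One path would terminate at $o'$, the other at $o''$, and each would have a free interior endpoint pinned down by the analysis.

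For the forward direction, suppose that $\S$ is linked into a simple polygon. I would reuse the recipe of Lemma~\ref{lem:link}: keep every visibility edge between endpoints outside the gadgets, link each ordinary gadget by its alternating path as in Figure~\ref{fig:gadget} right, and link the extended gadget by the two forced alternating paths identified above. Because these two paths replace the single ``closing'' alternating path of an ordinary gadget, the cycle on $\S$ becomes a polygonal chain on $\S'$ whose two endpoints are precisely the interior free endpoints dictated by the extended gadget.

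For the reverse direction, suppose $\S'$ is linked into a simple polygonal chain. The ordinary-gadget forcing of Lemma~\ref{lem:link} applies verbatim, so each ordinary gadget is traversed by a single alternating $o'$-to-$o''$ path. The extended-gadget analysis then forces both endpoints of the chain to lie inside the extended gadget; the remaining visibility edges outside the gadgets correspond to visibility edges in $V(\S)$, and reconnecting the two internal free endpoints through the original intersection $o$ closes the chain into a simple polygon on $\S$. I expect the main obstacle to be exactly this forcing step for the extended gadget: ruling out any ``escape linking'' in which a single long alternating path stitches the whole gadget together (which would force $\S'$ to link into a simple polygon rather than a chain, or place the chain endpoints outside the gadget). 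Verifying that the two new tall vertical segments at $x=8$ and $x=38$, the narrowed viewing cones at $u'$ and $v''$ guaranteed by the argument analogous to Lemma~\ref{lem:angle}, and the shortened top segment together admit no such escape is the delicate case analysis; once it is done, everything else is routine bookkeeping.
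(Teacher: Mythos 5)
Your proposal follows the paper's proof essentially verbatim: the forward direction reuses the ordinary-gadget linking and replaces the closing alternating path with two paths in the extended gadget, and the reverse direction rests on the same forced-linking analysis showing that both chain endpoints must lie inside the extended gadget (the paper pins this down by counting the five length-$1$ segments against the at most four neighbor slots on the two long segments in each of the two $12$-segment groups), with the two resulting alternating paths exiting at $o'$ and $o''$ and closing into a polygon at $o$ in $\S$. This is the same decomposition and the same key forcing argument as in the paper.
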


\begin{proof}
	We first prove the direct implication.
	Suppose that $\S$ can be linked into a simple polygon.
	We link the segments in $\S'$ as before, except that in the extended gadget
	we link the segments as illustrated in Figure~\ref{fig:extended4}.
	Then $\S'$ is linked into a simple polygonal chain starting and ending in the extended gadget.

	We next prove the reverse implication.
	Suppose that $\S'$ can be linked into a simple polygonal chain.
	As before, the linking of the segments in each ordinary gadget
	into an alternating path is inevitable.
	Refer to Figures~\ref{fig:extended1},
	\ref{fig:extended2},
	\ref{fig:extended3},
	\ref{fig:extended4}.
	The linking of the segments in the extended gadget into two disjoint alternating paths
	is also inevitable.
	Then, following the other visibility edges of the simple polygonal chain through $\S'$,
	which are outside and between the gadgets,
	$\S$ can be linked into a simple polygon.
\end{proof}

\begin{figure}[htbp]
	\centering\includegraphics{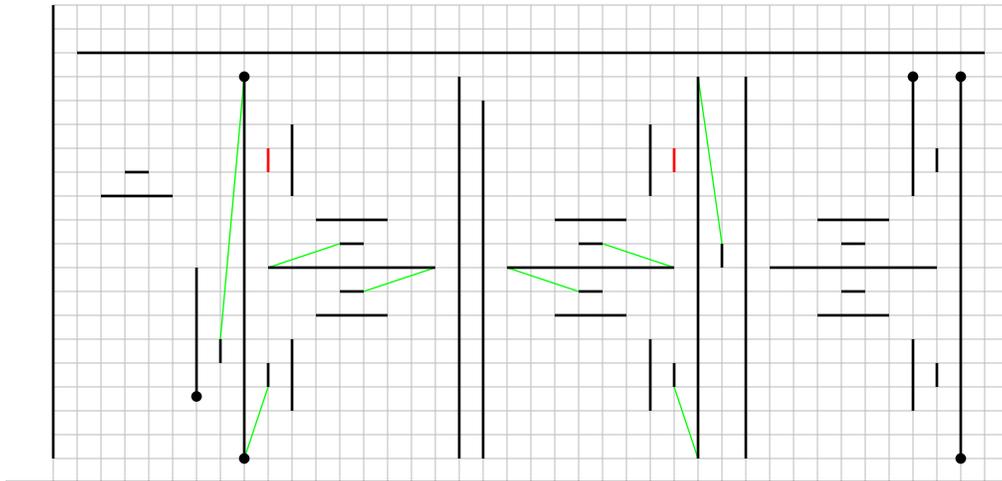}
	\caption{Consider the left group of five horizontal segments
	and the group of seven vertical segments to their left.
	There are five length-$1$ segments among these $12$ segments.
	Each of these length-$1$ segments can see either one or both of
	the length-$(18-2\epsilon)$ vertical segment and the length-$7$ horizontal segment,
	plus one other segment.
	Since the two long segments can accommodate at most four neighbors,
	at least one of the five length-$1$ segments must be linked to only one neighbor,
	and hence is either the starting or the ending segment of the polygonal chain.
	The situation is similar for the middle group of five horizontal segments
	and the group of seven vertical segments to their right.
	Among these $12$ segments, there are also five length-$1$ segments,
	and one of the five must be either the starting or the ending segment of the polygonal chain.
	For example, the polygonal chain could start and end at the two intervals illustrated in red,
	then the other eight length-$1$ segments must be linked to
	the two long vertical segments and the two long horizontal segments,
	which become unavailable for further linking.
	In particular, the length-$(18-2\epsilon)$ vertical segment on the left is now a barrier.
	}\label{fig:extended1}
\end{figure}

\begin{figure}[htbp]
	\centering\includegraphics{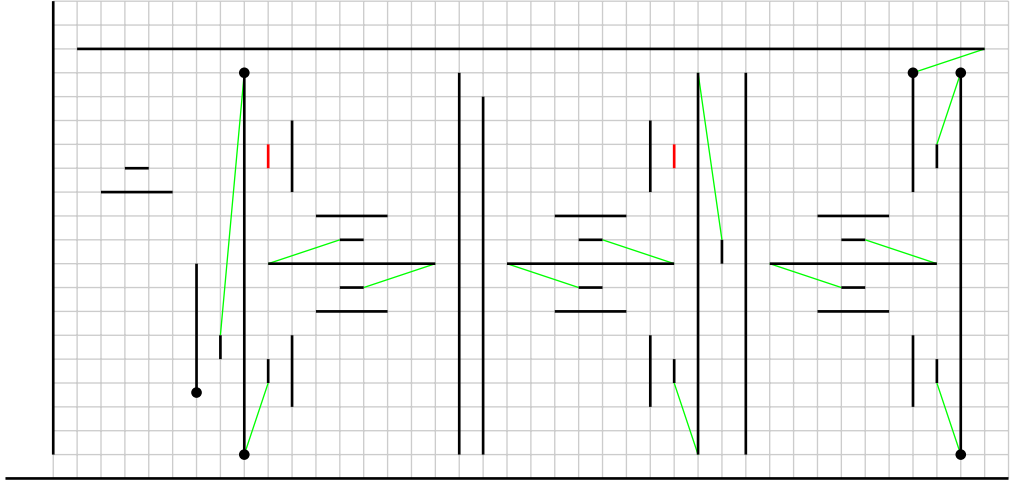}
	\caption{Since the starting and the ending segments of the polygonal chain
	have both been accounted for,
	each remaining segment must be linked to two neighbors.
	Consider the right group of five horizontal segments
	and the group of five vertical segments to their right.
	Among these $10$ segments, there are four length-$1$ segments
	which must be linked to the length-$7$ horizontal segment
	and the length-$(18-2\epsilon)$ vertical segment.
	Then this length-$(18-2\epsilon)$ vertical segment also becomes a barrier.
	Since the two length-$(18-2\epsilon)$ vertical segments at $x = 8$ and $x = 38$
	are only a distance of $\epsilon = \frac1{1000}$ away
	from the two horizontal segments at $y = 0$ and $y = 18$,
	the segments bounded by them are all isolated, except the black dot
	$v'' = (36, 18-3\epsilon + \delta)$,
	which must be linked to the right endpoint $(39,18)$ of the top horizontal segment.
	}\label{fig:extended2}
\end{figure}

\begin{figure}[htbp]
	\centering\includegraphics{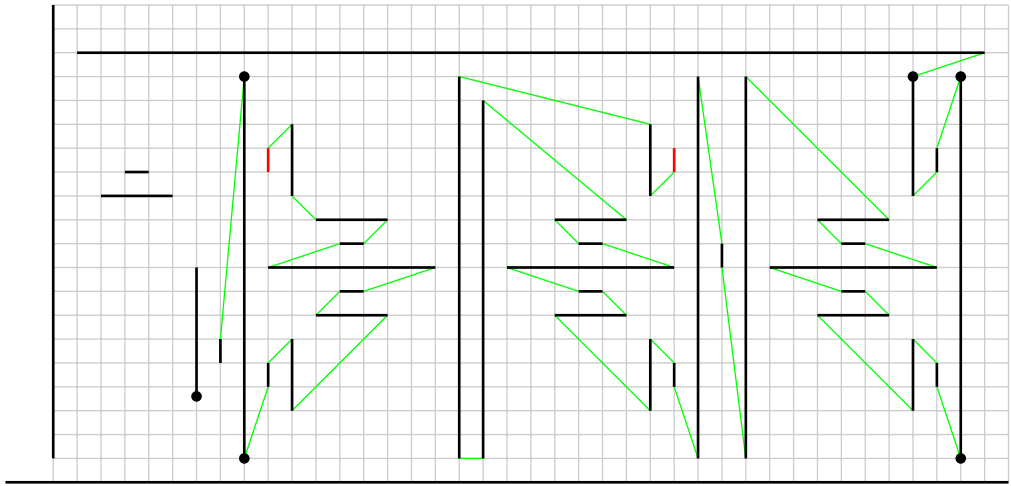}
	\caption{The isolated segments have only two ways out,
	one through the length-$(18-2\epsilon)$ vertical segment on the left,
	and the other through the endpoint $v''$ on the right.
	Thus all of them must be linked internally into two disjoint alternating paths.
	}\label{fig:extended3}
\end{figure}

\begin{figure}[htbp]
	\centering\includegraphics{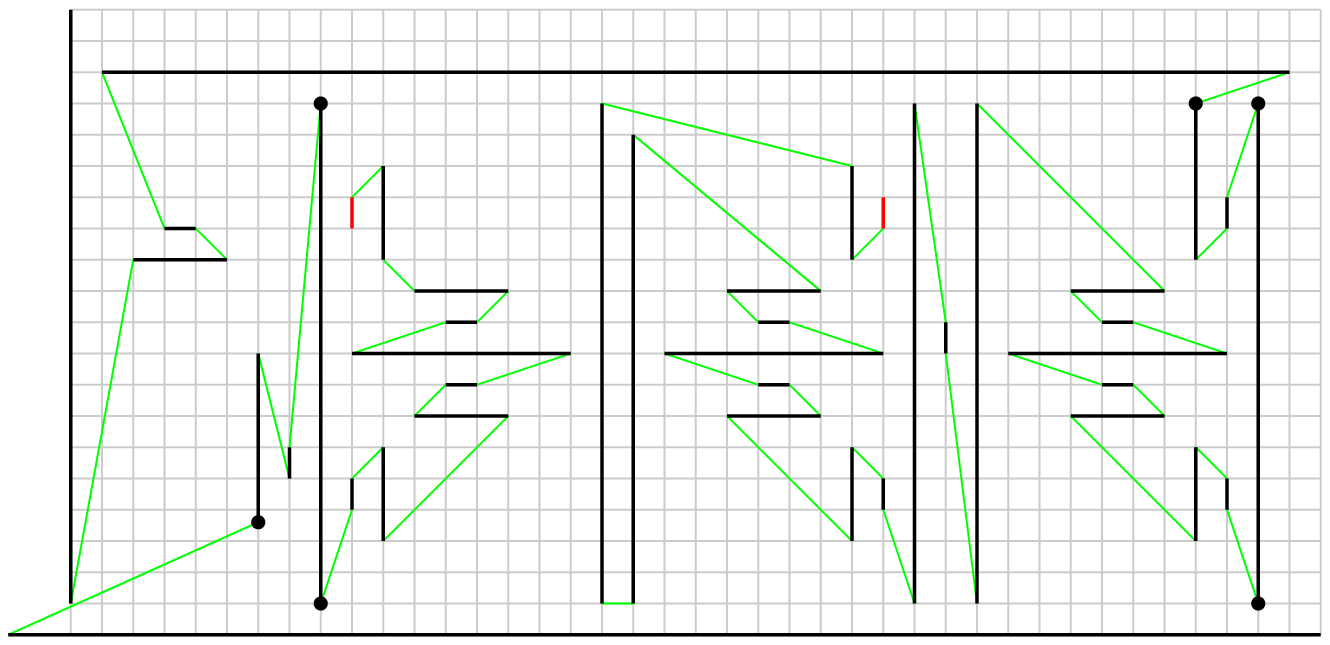}
	\caption{The two disjoint alternating paths must then extend through the remaining segments,
	and finally exit the extended gadget at $o'$ and $o''$,
	and will be joined into a single polygonal chain
	through the segments outside the extended gadget.
	}\label{fig:extended4}
\end{figure}

Thus \spath\ is also NP-hard.
This completes the proof of Theorem~\ref{thm:hard}.

\end{document}